\def\blfootnote{\xdef\@thefnmark{}\@footnotetext}
\def\blfootnote{\xdef\@thefnmark{}\@footnotetext}
\newtheorem{lemma}{Lemma}
\title{\huge{Slow Fluid Antenna Multiple Access with Multiport Receivers}}
\begin{document}

\author{Jos\'e P. Gonz\'alez-Coma and F. Javier L\'opez-Mart\'inez,~\IEEEmembership{Senior Member,~IEEE}}

\maketitle

\begin{abstract}
We investigate whether equipping fluid-antenna (FA) receivers with multiple ($L>1$) radiofrequency (RF) chains can improve the performance of the slow fluid-antenna multiple access (FAMA) technique, which enables open-loop connectivity with channel state information (CSI) available only at the receiver side. We analyze the case of slow-FAMA users equipped with multiport receivers, so that $L$ ports of the FA are selected and combined to reduce interference. We show that a joint design of the port selection matrix and the combining vector at each receiver yields significant performance gains over reference schemes, demonstrating the potential of multiport reception in FA systems with a limited number of RF chains.
\end{abstract}

\begin{IEEEkeywords}
Fluid antenna systems, fluid antenna multiple access, MIMO, multi-user communications, interference.
\end{IEEEkeywords}

\blfootnote{\noindent Manuscript received XX XX, XXXX. The review of this paper was coordinated by XXXX. This work is supported by grant PID2023-149975OB-I00 (COSTUME) funded by MICIU/AEI/10.13039/501100011033 and FEDER/UE. The authors thank the Defense University Center at the Spanish Naval Academy for their support. This work has been submitted to the IEEE for publication. Copyright may be transferred without notice, after which this version may no longer be accessible}

\blfootnote{\noindent J.P. Gonz\'alez-Coma is with the Defense University Center at the Spanish Naval Academy, 36920 Marín, Spain. Contact email: jose.gcoma@cud.uvigo.es.}
\blfootnote{\noindent F.J. L{\'o}pez-Mart{\'i}nez is with Dept. Signal Theory, Networking and Communications, Research Centre for Information and Communication Technologies (CITIC-UGR), University of Granada, 18071, Granada, (Spain). %Contact e-mail: $\rm fjlm@ugr.es$.
} 

%https://arxiv.org/pdf/1903.11240
%https://arxiv.org/pdf/1908.03795
%https://en.wikipedia.org/wiki/Min-max_theorem#Cauchy_interlacing_theorem
%\vspace{-3mm}
\section{Introduction}
%\blindtext
%\blindtext

\IEEEPARstart{F}{luid} antenna systems (FASs) have gained considerable attention in recent years as a promising alternative to conventional \ac{MIMO} systems that rely on fixed-position antennas \cite{Wong2021}. FASs refer to any controllable structure —whether electronically, mechanically, or otherwise actuated— that can dynamically alter its fundamental radio-frequency (RF) characteristics, effectively emulating physical changes in shape or position to adapt to the wireless environment \cite{Tutorial2025}. The fine-grained capability of FASs to sample spatial locations brings enhanced flexibility to exploit spatial diversity, by selecting the receive antenna port (akin to position) that is more beneficial for signal reception \cite{New2025}.

One of the key potential use cases of FASs is the provision of simplified open-loop multiple access, without the need for \ac{CSI} at the transmitter side or \ac{SIC} at the receiver side. This technique, referred to as \ac{FAMA}, allows serving multiple users in the same time/frequency resource with CSI required only at the receiver side. Specifically, the slow-\ac{FAMA} paradigm introduced in \cite{SlowFAMA} manages to relax the stringent port-switching requirements of the fast-\ac{FAMA} counterpart \cite{FAMA}, thus reducing complexity while still allowing for user multiplexing.

Since their inception, FASs have been envisioned as a low-complexity alternative to conventional \ac{MIMO} architectures to improve system performance. For that reason, virtually all FAMA-based schemes consider a single \ac{RF} chain at the receiver (user) ends, and the port selection mechanism aims to choose that with the best \ac{SINR}. Indeed, there have been attempts to consider the case with a larger number of RF chains at the receiver ends in the context of FASs: for instance, the use of \ac{MRC} to combine the signals of the $L$ best ports was proposed in \cite{Lai2024} for a point-to-point (single user) case. However, \ac{MRC} offers limited benefits in the presence of interference, which is the operational regime in FAMA. In \cite{CUMA}, the \ac{CUMA} architecture was proposed, allowing to select a subset of ports for which the desired signal is phase-aligned. While \ac{CUMA} manages to improve slow-\ac{FAMA} multiplexing capabilities by using a limited number of RF chains (up to four \cite{CUMA2}), it comes in hand with some practical challenges: (\textit{i}) requires a pre-processing stage to {configure} the subset of active ports; (\textit{ii}) assumes that an arbitrary number of ports are active at the same time; and (\textit{iii}) requires precise knowledge of \ac{CSI} at all ports. 

In this work, we revisit the problem of slow-FAMA when FA users have multiport receivers, i.e., they can be equipped with more than one RF chain. Contrary to prior art, we propose a method to jointly select the active ports of the FA and the combining vector for the signals of the designated ports. One of the key novelties lies in the fact that the spatial features of the channel matrix for the FA array are considered in the procedure. This novel approach enhances the achievable gain and, more importantly, considerably reduces the inter-user interference. Results show that the use of only a few RF chains at the FA users can significantly boost the \ac{SINR} compared to state-of-the-art alternatives.

\textit{Notation:} $a$ is a scalar, $\mathbf{a}$ is a vector, and $\mathbf{A}$ is a matrix. Transpose and conjugate transpose of $\mathbf{A}$ are denoted by $\mathbf{A}^{\operatorname{T}}$ and $\mathbf{A}^{\operatorname{H}}$, respectively. Calligraphic
letters, e.g., $\mathcal{A}$ denote sets and sequences. $\mid \mathcal{A} \mid$ represents the set cardinality. % and $\mathcal{A}\setminus \lbrace b \rbrace$ stands for the exclusion of $b$ from $\mathcal{A}$, $\odot$ and $\otimes$ represent the Hadamard and the Kronecker product, respectively.
Finally, the expectation operator is $\mathbb{E}\left\{\cdot\right\}$ and $\parallel \cdot  \parallel_p$ denotes the $p$-norm.
%\vspace{-3mm}
\section{System Model}
\label{sec:SM}
Let us consider a general \ac{MIMO} system with a \ac{BS} deploying $M$ antennas that serves $K$ users, each equipped with a FA array of $N$ ports capable of selecting $L$ active FA ports based on channel conditions. In coherence with the original formulation of slow-\ac{FAMA} \cite{SlowFAMA}, we assume that the number of antennas at the \ac{BS} $M$ and the number of users $K$ are equal, i.e., $M=K$. 

We denote the data symbol intended for user $k$ as $z_k\in\mathbb{C}$ with $\mathbb{E}\left\{|z_k|^2\right\}=\sigma_S^2$ , so that the received signal at the $k$-th user can be written as:
\begin{equation}
    \mathbf{x}_k = \mathbf{H}_k\mathbf{p}_kz_k +\sum_{j\neq k}\mathbf{H}_k\mathbf{p}_jz_j+ \mathbf{n}_k,
\end{equation}
where $\mathbf{H}_k \in \mathbb{C}^{N \times M}$ is the channel matrix between the base station and user $k$, and $\mathbf{x}_k \in \mathbb{C}^{N \times 1}$ denotes the received signal vector at the FA-user side\footnote{Note that while the vector form of $\mathbf{x}_k$ suggests a 1D implementation of the FA array, the 2D case is also captured by a proper remapping of the 2D FA indices.}. As one of the key features of slow-\ac{FAMA} is its simplicity, each of the transmit antennas is dedicated to a different user, i.e., no channel knowledge is required at the \ac{BS} \cite{SlowFAMA}. Accordingly, the precoding vector at the BS is simply set as $\mathbf{p}_k=\mathbf e_k$, where $\mathbf e_k$ is the $M$-dimension canonical vector with zero entries except for the $k$-th position. We consider that such $\mathbf{p}_k$ is given and fixed. Finally, $\mathbf{n}_k \in \mathbb{C}^{N \times 1}$ represents the additive Gaussian white noise vector whose elements have $\sigma^2$ power.

At the receiver side, each FA-equipped user selects the appropriate $L$ ports based on the user channel gain and interference. We model this feature by introducing the port selection matrix $\mathbf{S}_k\in\mathcal{B}$, with the set of matrices $\mathcal{B} := \left\{ \mathbf Z \in \{0,1\}^{N \times L}, \|\mathbf Z\|_{0,\infty} \leq 1 \right\}$. Moreover, the resulting signals from the $L$ ports can be coherently received by using the combiner $\mathbf{w}_k\in\mathbb{C}^L$, with $\|\mathbf w_k\|_2 = 1$, $\forall k$. As such, the received symbol results\footnote{Note that the matrix $\mathbf{S}_k$ and the vector $\mathbf{w}_k$ can be designed independently (locally) for each user; since they are implemented at the receiver ends, they do not affect the amount of interference suffered by the remaining users in the system.} in
\begin{equation}
    \hat{z}_k=\mathbf{w}_k^H\mathbf{S}_k^T\mathbf{x}_k.
\end{equation}
From a channel modeling perspective, the fluid antenna array resembles a collection of co-located radiating elements, representing the ports where the fluid antenna can switch into. Thus, when considering the case of spatially correlated Rayleigh fading, the $m$-th column of the channel matrix $\mathbf H_k$ can be expressed as 
\begin{equation}
    \mathbf{h}_{k,m} \sim \mathcal{N}_\mathbb{C}(\mathbf{0}, \boldsymbol{\Sigma}_k),
    \label{eq:channel}
\end{equation}
where the dependence on the port positions appears in the channel spatial correlation matrix $\boldsymbol{\Sigma}_k \in \mathbb{C}^{N \times N}$. In general, the structure of $\boldsymbol{\Sigma}_k$ is determined by the fluid antenna topology, i.e., the relative positions of the antenna ports within the overall aperture. For a 1D fluid antenna, Jakes’ correlation model is widely adopted. However, this model does not generalize properly to planar (2D) arrays with $N_1\times N_2$ ports. For such systems, we will employ Clarke’s model to define $\boldsymbol{\Sigma}_k$,  considering $N=N_1\times N_2$ \cite{RaMoWo24}. In all instances, we assume that the ports in each dimension are equally spaced and denote by $W$, (or equivalently $W_1\times W_2$), the length of the fluid antenna normalized by the wavelength.% $\lambda$.

To evaluate the performance of the proposed setup, we consider the \ac{SINR} of each user, as follows
\begin{equation}
    \text{SINR}_k = \frac{|\mathbf{w}_k^H\mathbf{S}_k^T\mathbf{H}_k\mathbf{p}_k|^2}{\sum_{j \neq k} |\mathbf{w}_k^H\mathbf{S}_k^T\mathbf{H}_k\mathbf{p}_j|^2 + \frac{1}{\mathrm{SNR}}},
\end{equation}
as it is directly related to the achievable \ac{SE} $R_k$, with $\mathrm{SNR}=\sigma_S^2/\sigma^2$ being the transmit SNR. Thus, we define the problem formulation as follows
\begin{align}
\underset{\{\mathbf S_k\in\mathcal{B},\mathbf w_k\}_{k=1}^K}{\text{max}}
\quad & \hspace{-4mm}\sum\nolimits_{k=1}^K R_k 
\;=\;\sum\nolimits_{k=1}^K \log_2\big(1 + \text{SINR}_k\big). 
\label{eq:probForm}
\end{align}
Since FAMA operates in open-loop, each user can design their corresponding $\mathbf S_k$ and $\mathbf w_k$ independently. From the perspective of the $k$-th user, the problem formulation requires the joint design of the port selection matrix $\mathbf S_k$ and the combining vector $\mathbf w_k$. The FAMA-based literature has primarily focused in the baseline case where $L=1$; hence, the usual strategy is to select the port that maximizes the \ac{SINR}, and a trivial scalar value of one is used as the combining vector.

For the cases of CUMA using more than one RF chain \cite{CUMA}, $\mathbf S_k$ and $\mathbf w_k$ are designed sequentially: first, the selection matrix is configured following a sign-based criterion for selecting a subset of ports for which the desired signal adds up constructively. Then, direct detection or matrix inversion-based strategies are proposed to implement $\mathbf w_k$. Taking a look at the related literature regarding classical antenna selection schemes, algorithms that avoid an exhaustive search over all candidate antenna subsets to design $\mathbf S_k$ are described, e.g. \cite{GoNaPa00,GhGe04,Go02, MeRuAlGoHe15}. In general, these approaches rewrite the performance metric $R_k$ by isolating the contribution of a particular column of the channel matrix. However, such formulations are well-suited for a single-user scenario, but cannot be adapted to the multi-user case in FAMA due to the inter-user interference term.

\section{SINR-based port selection and combining}
\label{sec:SINR}
In this section, we focus on the design of the selection matrix and the combining vector for a FA-based multiport receiver equipped with $L>1$ RF chains.
\subsection{Digital Combining}
We first aim to generalize the port selection criterion in slow-FAMA to the case with $L$ active ports. Similar to \cite{GhGe04,MeRuAlGoHe15}, we solve the problem formulation in \eqref{eq:probForm} in two stages. First, we start by designing the matrix $\mathbf S_k$ following the slow \ac{FAMA} heuristic in \cite{SlowFAMA} (i.e., selecting the $L$ ports with the best SINR). Then, after the active ports are decided, we determine $\mathbf w_k$ aiming to maximize the SINR after combination. This scheme will be referred to as \ac{DC}.

Let us denote as $\mathbf h_{k,r}^T$ the $r$-th row of $\mathbf{H}_k$. Then, for the canonical precoding vectors $\mathbf p_k$ described in Sect. \ref{sec:SM}, we can formulate the slow \ac{FAMA} port selection procedure with $L$ ports as follows
\begin{equation}
     \underset{\mathcal{N}_k \subset \{1, 2, \dots, N\},\, |\mathcal{N}_k| = L}{\operatorname{arg\,max}} \sum_{r \in \mathcal{N}_k} \frac{|\mathbf h_{k,r}^T \mathbf p_k|^2}{\|\mathbf h_{k,r}\|_2^2-|\mathbf h_{k,r}^T \mathbf p_k|^2+ \frac{1}{\mathrm{SNR}}}.
     \label{eq:fama}
\end{equation}
By finding the solution to this formulation, we build the selection matrix with the canonical vectors corresponding to the rows in $\mathcal{N}_k $, i.e. $\mathbf S_k=[\mathbf e_{\mathcal{N}_k(1)},\mathbf e_{\mathcal{N}_k(2)},\ldots,\mathbf e_{\mathcal{N}_k(L)}]$. For the given selection matrix $\mathbf S_k$, we need to design an adequate combining vector $\mathbf w_k$. This vector should increase the signal strength and, at the same time, reduce the inter-user interference. In other words, we aim to solve  
\begin{equation}
    \underset{\mathbf w_k}{\text{max}} \frac{\mathbf{w}_k^H\tilde{\mathbf A}_k\mathbf{w}_k}{\mathbf{w}_k^H\tilde{\mathbf B}_k\mathbf{w}_k},
    \label{eq:Rayquot}
\end{equation}
where we introduced $\tilde{\mathbf A}_k=\mathbf{S}_k^T\mathbf A_k\mathbf{S}_k$ as a submatrix of $\mathbf A_k=\mathbf{H}_k\mathbf{p}_k\mathbf{p}_k^H\mathbf{H}_k^H$,  and $\tilde{\mathbf B}_k=\mathbf{S}_k^T\mathbf B_k\mathbf{S}_k$ as a positive definite submatrix of $\mathbf B_k=\sum_{j \neq k}\mathbf{H}_k\mathbf{p}_j\mathbf{p}_j^H\mathbf{H}_k^H+\mathbf I_R/\mathrm{SNR}$. Due to $\mathbf{S}_k$, the selected row and column indices are common for both $\tilde{\mathbf A}_k$ and $\tilde{\mathbf B}_k$.
Hence, finding the optimal combining vector $\mathbf w_k$ can be recast as a generalized eigenvalue problem \cite{GhKaCr23}, and solved by the dominant generalized eigenvector of the matrix pair $(\tilde{\mathbf A}_k,\tilde{\mathbf B}_k)$ \cite{ScBo04}. In particular, note that defining the vector $\mathbf x_k=\tilde{\mathbf B}_k^{-1/2}\mathbf w_k$ we can find a solution using that $\mathbf x_k$ is the dominant eigenvector of $\tilde{\mathbf C}_k=\tilde{\mathbf B}_k^{-H/2}\tilde{\mathbf A}_k\tilde{\mathbf B}_k^{-1/2}$, and the achievable SINR for the $k$-th user is the associated dominant eigenvalue.

\subsection{Generalized Eigenvector Port Selection}
The extension of the slow-FAMA port selection scheme to the multiport case from in \eqref{eq:fama} aims to select the most promising $L$ ports from the SINR perspective. However, under this criterion the effect of spatial correlation among ports is neglected in the process. In other words, DC selects isolated entries from $\mathbf A$ and $\mathbf B$, but the structure of such matrices is ignored. Besides, the sequential design of the selection matrix and the combining vector may not yield optimal performance. Hence, it is legitimate to ask whether some improvement over the DC scheme can be attained when these aspects are taken into consideration. In the sequel, for the sake of notational simplicity, the dependence on the user index $k$ is dropped.

We propose an alternative strategy that \textit{jointly} designs $\mathbf S$ and $\mathbf w$, and exploits the spatial information provided by $\mathbf A$ and $\mathbf B$. This scheme will be referred to as \ac{GEPort}. Let us take as starting assumption that we can afford to select all $N$ ports at the FA. Then, let us evaluate what is the SINR loss introduced by switching off one of the $N$ available ports. This measure will be obtained by considering the SINRs related to the matrix entries as well as the matrix structures, potentially achieving improved system performance. We define the proposed metric in the following lemma.
\begin{lemma}\label{thm}
Let $\mathbf v_1, \mathbf v_2, \ldots,\mathbf v_N$ be the generalized eigenvectors of the matrix pair $(\mathbf A,\mathbf B)$, associated with the corresponding generalized eigenvalues $\lambda_1 \leq \lambda_2 \dots \leq \lambda_N$, and $v_{N,l}$ the $l$-th element of $\mathbf v_N$. In addition, let $\alpha_{l,1} \leq \alpha_{l,2} \dots \leq \alpha_{l,N-1}$ be generalized eigenvalues of the matrix pair  $(\tilde{\mathbf A}_l,\tilde{\mathbf B}_l)$ obtained by disabling the $l$-th port (i.e. removing row $l$ and column $l$ from $\mathbf A$ and $\mathbf B$). Then, the SINR drop $\delta_{l}$ due to deactivating port $l$ is given by
\begin{equation}
    \delta_{l}=|v_{N,l}|^2(\lambda_N-\lambda_{N-1})\prod_{t=1}^{N-2}\frac{(\lambda_N-\lambda_t)}{(\lambda_N-\alpha_{l,t})}.
    \label{eq:SINRdrop}
\end{equation}
\end{lemma}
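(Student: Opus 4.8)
The plan is to recognize the two quantities being compared as dominant generalized eigenvalues: the post-combining SINR with all $N$ ports active equals $\lambda_N$, while the SINR obtained after switching off port $l$ equals $\alpha_{l,N-1}$, the largest generalized eigenvalue of the reduced pencil $(\tilde{\mathbf A}_l,\tilde{\mathbf B}_l)$. Hence the drop is simply $\delta_l=\lambda_N-\alpha_{l,N-1}$, and the whole task reduces to expressing $\alpha_{l,N-1}$ in terms of the spectral data of the full pencil $(\mathbf A,\mathbf B)$. I would do this by comparing the two generalized characteristic polynomials $p(\lambda)=\det(\mathbf A-\lambda\mathbf B)$ and $p_l(\lambda)=\det(\tilde{\mathbf A}_l-\lambda\tilde{\mathbf B}_l)$.

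First I would invoke the cofactor (Jacobi) identity: since $\tilde{\mathbf A}_l-\lambda\tilde{\mathbf B}_l$ is the principal submatrix of $\mathbf A-\lambda\mathbf B$ obtained by deleting row and column $l$, its determinant is the $(l,l)$ cofactor of $\mathbf A-\lambda\mathbf B$, so that $p_l(\lambda)=\big[(\mathbf A-\lambda\mathbf B)^{-1}\big]_{ll}\,p(\lambda)$ wherever $p(\lambda)\neq 0$. Next I would substitute the $\mathbf B$-orthonormal generalized eigendecomposition $\mathbf V^{\operatorname H}\mathbf B\mathbf V=\mathbf I$, $\mathbf V^{\operatorname H}\mathbf A\mathbf V=\operatorname{diag}(\lambda_t)$ to obtain the two closed forms $p(\lambda)=\det(\mathbf B)\prod_{t}(\lambda_t-\lambda)$ and $\big[(\mathbf A-\lambda\mathbf B)^{-1}\big]_{ll}=\sum_t |v_{t,l}|^2/(\lambda_t-\lambda)$, which together give a partial-fraction expression for $p_l$.

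The decisive step is to evaluate this at $\lambda=\lambda_N$. Every residue term with $t<N$ is multiplied by a product that contains the vanishing factor $(\lambda_N-\lambda_N)$, so the sum collapses to the single term $t=N$, yielding $p_l(\lambda_N)=\det(\mathbf B)\,|v_{N,l}|^2\prod_{s\ne N}(\lambda_s-\lambda_N)$. Equating this with the reduced product form $p_l(\lambda_N)=\det(\tilde{\mathbf B}_l)\prod_{t=1}^{N-1}(\alpha_{l,t}-\lambda_N)$, I would isolate the top factor $(\alpha_{l,N-1}-\lambda_N)=-\delta_l$ by dividing out the remaining $N-2$ reduced factors, and then fix the signs using $\lambda_N\ge\lambda_s$ together with the Cauchy-type interlacing $\alpha_{l,t}\le\lambda_N$ (which also guarantees $\delta_l\ge0$). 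Splitting off the $(\lambda_N-\lambda_{N-1})$ term from the numerator then reproduces the stated product over $t=1,\dots,N-2$.

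The main obstacle, and the place where care is needed, is the scalar prefactor: the computation above produces $\delta_l$ multiplied by $\det(\mathbf B)/\det(\tilde{\mathbf B}_l)=1/[\mathbf B^{-1}]_{ll}=1/\sum_t|v_{t,l}|^2$, so matching the clean coefficient $|v_{N,l}|^2$ in the statement requires pinning down the eigenvector normalization convention (equivalently, folding this determinant ratio into the definition of $v_{N,l}$). I would also flag the genericity assumptions — a simple dominant eigenvalue $\lambda_N>\lambda_{N-1}$ and $\lambda_N\notin\{\alpha_{l,t}\}$ — needed for the division and for the products to be well defined, noting that in the FAMA model $\mathbf A$ is rank one so all but one $\lambda_t$ vanish, which does not affect the argument but simplifies the resulting expression.
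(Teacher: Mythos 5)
Your proposal takes a genuinely different route from the paper's proof. The paper invokes the eigenvector--eigenvalue identity of Denton et al.\ as a black box, applying it to the Hermitian form $\mathbf C=\mathbf B^{-H/2}\mathbf A\mathbf B^{-1/2}$ and to $\tilde{\mathbf C}_l=\tilde{\mathbf B}_l^{-H/2}\tilde{\mathbf A}_l\tilde{\mathbf B}_l^{-1/2}$, whereas you rebuild the identity for the pencil $(\mathbf A,\mathbf B)$ from first principles: the cofactor identity $\det(\tilde{\mathbf A}_l-\lambda\tilde{\mathbf B}_l)=\det(\mathbf A-\lambda\mathbf B)\,[(\mathbf A-\lambda\mathbf B)^{-1}]_{ll}$, the $\mathbf B$-orthonormal eigendecomposition, and evaluation of the resulting partial-fraction expansion at $\lambda=\lambda_N$. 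Those steps are all sound, as is the identification $\delta_l=\lambda_N-\alpha_{l,N-1}$ of both SINRs as dominant generalized eigenvalues.

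The gap is in your final reconciliation step. The leftover prefactor is \emph{not} a normalization convention and cannot be ``folded into the definition of $v_{N,l}$'': since $\det(\mathbf B)/\det(\tilde{\mathbf B}_l)=1/[\mathbf B^{-1}]_{ll}$ depends on $l$, while rescaling $\mathbf v_N$ multiplies every entry $|v_{N,l}|^2$ by the \emph{same} constant, no choice of eigenvector normalization absorbs it. What your computation actually proves, with $\mathbf B$-orthonormal generalized eigenvectors ($\mathbf V^{H}\mathbf B\mathbf V=\mathbf I$), is
\[
\delta_l=\frac{|v_{N,l}|^2}{[\mathbf B^{-1}]_{ll}}\,(\lambda_N-\lambda_{N-1})\prod_{t=1}^{N-2}\frac{\lambda_N-\lambda_t}{\lambda_N-\alpha_{l,t}},
\]
i.e., the lemma's expression divided by $[\mathbf B^{-1}]_{ll}$. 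This discrepancy is real, not an artifact of your method: the paper's appendix obtains the clean form only by applying the Hermitian eigenvector--eigenvalue identity to the pair $(\mathbf C,\tilde{\mathbf C}_l)$, which implicitly treats $\tilde{\mathbf C}_l$ as the principal submatrix of $\mathbf C$ obtained by deleting row and column $l$; that identification generally fails unless $\mathbf B$ is diagonal, because forming the Hermitian quotient and taking principal submatrices do not commute. A sanity check confirms your corrected version: in the FAMA setting $\mathbf A$ is rank one, so by interlacing $\alpha_{l,t}=0$ for $t\leq N-2$, the product collapses to one, and your formula reduces to the classical deflation identity $\delta_l=|[\mathbf B^{-1}\mathbf a]_l|^2/[\mathbf B^{-1}]_{ll}$ (with $\mathbf A=\mathbf a\mathbf a^{H}$), whereas the lemma as written would give $|[\mathbf B^{-1}\mathbf a]_l|^2$. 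So rather than trying to argue the prefactor away, you should carry your derivation to its honest conclusion: it proves a corrected form of the lemma, whose practical consequence is that the discard rule in the GEPort algorithm should be $\min_l|v_{N,l}|^2/[\mathbf B^{-1}]_{ll}$ rather than $\min_l|v_{N,l}|^2$.
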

\begin{proof}\label{proof-th1}
See Appendix. 
\end{proof}
The result in Lemma \eqref{thm} is useful to the discard the port that presents the lowest performance reduction. As we will later discuss, this process can be iterated until we reach the subset of $L$ ports with the lowest aggregate SINR loss. After observing Lemma \eqref{thm}, some relevant remarks are in order: Note that the first and third factors in \eqref{eq:SINRdrop} depend on the candidate port $l$. Unfortunately, the third factor includes $\alpha_{l,n}$ in the denominator, which is unknown unless the generalized eigenvalues of the reduced matrix pair $(\tilde{\mathbf{A}}, \tilde{\mathbf{B}})$ are computed. Obtaining $\alpha_{l,t}$ $\forall l,t$ can be impractical when the number of ports $N$ grows large. To avoid this and reduce complexity, we can exploit the Cauchy interlacing theorem, which states that the eigenvalues of a Hermitian matrix interlace with those of any of its principal submatrices. Specifically, the eigenvalues of a matrix and a submatrix satisfy \cite{golub2013matrix}
\begin{equation}
    \lambda_{i+1} \geq \alpha_{l,i} \geq \lambda_i,\;\forall i\in\{1,\ldots,N-1\},\;\forall l\in\{1,\ldots,N\}.
\end{equation}
As such, all the differences in the product of the third factor \eqref{eq:SINRdrop} are positive valued, and the quotients for such a factor are smaller than or equal to $1$. Consequently, we set a lower bound for the SINR drop as follows
\begin{equation}
    \delta_{l}\geq|v_{N,l}|^2(\lambda_N-\lambda_{N-1}).
    \label{eq:SINRdropApp}
\end{equation}
Note also that the first factor $|v_{N,l}|^2$, which is given by the entries of the dominant generalized eigenvector of the complete matrix pair $(\mathbf A,\mathbf B)$, depends on $l$, while the remaining factor in \eqref{eq:SINRdropApp} is equal for different ports. Moreover, when $|v_{N,l}|^2\approx 0$, the SINR drop satisfies $\delta_{l}\approx 0$ and deactivating the port $l$ has a negligible impact on the performance. As a consequence, we are interested in the {index $l$ attaining the} smallest value of $|v_{N,l}|^2$, potentially leading to a small performance reduction when the port $l$ becomes inactive.

\begin{algorithm}[!t]
\caption{Generalized Eigenvector Port Selection (GEPort)}
\label{alg}
\begin{algorithmic}[1]
\STATE \textbf{Initialization:} $n \gets 0$, $\mathcal{N}\gets \emptyset$,  $\tilde{\mathbf A}\gets \mathbf A,\tilde{\mathbf B}\gets \mathbf B$ 
\WHILE{$N-n > L$}
    \STATE $\mathbf v_{N-n}\gets$ Power method for $(\tilde{\mathbf A},\tilde{\mathbf B})$
    \STATE $\lambda_{N-n}\gets$ Compute with \eqref{eq:eigenvalue}
    \STATE $l^*\gets\arg \min_l |v_{N-n,l}|^2$
    \STATE $\mathcal{N}\gets \mathcal{N}\cup\{l^*\}$\
    \STATE $\tilde{\mathbf A}\gets $ Remove row and column $l^*$ from $\tilde{\mathbf A}$
    \STATE $\tilde{\mathbf B}\gets $ Remove row and column $l^*$ from $\tilde{\mathbf B}$
    \STATE $\delta\gets \lambda_{N}-\lambda_{N-n}$. Accumulated performance loss 
    \STATE $n\gets n+1$
\ENDWHILE
\STATE \textbf{Return:} $\mathbf S=[\mathbf e_{\mathcal{N}(1)},\mathbf e_{\mathcal{N}(2)},\ldots,\mathbf e_{\mathcal{N}(L)}]$, $\mathbf{w}=\mathbf v_{L}$
\end{algorithmic}
\end{algorithm}

Recall that Lemma \ref{thm} considers the effect of removing the first port. As previously mentioned, the same idea can be repeated until the desired number of ports $L$ is selected. Alternatively, ports can be deactivated to achieve a desired spectral efficiency value or until a predefined performance reduction is attained. In particular, we propose a criterion for building the matrix $\mathbf S$ based on the iterative application of the result in Lemma \ref{thm} for the determination of the set $\mathcal{N}$. Hence, for the $n$-th step in the iteration procedure, we have that
\begin{equation}
    \mathcal{N}=\mathcal{N}\cup\{l^*\}\;\text{s.t.} \;l^*=\arg \min_l |v_{N-n,l}|^2,
\end{equation}
where $\mathbf v_{N-n}$ is the dominant generalized eigenvector for the matrix pair $(\tilde{\mathbf A},\tilde{\mathbf B})$, which are obtained by removing the $n-1$ rows and the columns with indices belonging to the set $\mathcal{N}$, and $v_{N-n,l}$ is the $l$-th entry of such a vector. The procedure proposed to find the dominant generalized eigenvector $\mathbf v_N$ is the so-called power method \cite{golub2013matrix}, which starts with an initial candidate for the eigenvector, $\mathbf t$, and refines the current approximation using $\mathbf t = \tilde{\mathbf B}^{-1} \tilde{\mathbf A}  \mathbf t$. Once $\mathbf t$ converges, the corresponding eigenvalue $\lambda_{N-n}$ is calculated as follows
\begin{equation}
    \lambda_{N-n}=\frac{\mathbf{t}^H\tilde{\mathbf A}\mathbf{t}}{\mathbf{t}^H\tilde{\mathbf B}\mathbf{t}}. \label{eq:eigenvalue}   
\end{equation}
This method is far more efficient than computing the full decomposition. The complete procedure for the GEPort algorithm to jointly select the $L$ ports and design $\mathbf{w}$ is summarized in Alg. \ref{alg}.

\section{Numerical Results}
In this section, we conduct some experiments to assess the performance of the proposed schemes for multiport FAMA. Baseline parameters for the comparison are:  $\sigma_S^2=1$, $M=K=4$ users/antennas at the BS side, a $W=4$ normalized fluid antenna length (1D case) and $W_1\times W_2=4\times 1$ (2D case), $N=100$ FA ports, and $L=2$ active ports. Monte Carlo simulations based on the channel model in \eqref{eq:channel} are used. For benchmarking purposes, we use the cases of conventional slow-FAMA \cite{RaMoWo24} (single RF chain) and \ac{CUMA} (2 RF chains)\cite{CUMA}, and compare them with the DC and GEPort strategies here proposed.

%. We compare the results of the two strategies proposed here, i.e. the extension of slow \ac{FAMA}, which is labeled as \ac{DC}, and the proposed \ac{GEPort} procedure. In addition, we include slow-FAMA \cite{RaMoWo24} and \ac{CUMA} \cite{CUMA} methods as benchmarks. 

In Fig. \ref{fig:SNR} we evaluate the improvements in terms of the achievable average \ac{SE} of user $k$ for the aforementioned schemes, as a function of the transmit SNR. Solid lines represent the case of users equipped with a 1D FA ($N=100$, $W=4$), while dotted lines correspond to a 2D FA ($N=60\times 15$, $W=4\times 1$). We see that \ac{GEPort} achieves the highest SINR across all SNR levels. More importantly, the performance improvement grows when the system operates in the interference-limited regime (i.e., high transmit SNR). Also, we see that in such a regime, DC behaves similarly as CUMA (even better for the 2D case). Note that while CUMA only operates with 2 RF chains, it requires an arbitrary number of ports to be activated for signal combination.

%As can be seen, \ac{GEPort} achieves the highest SINR across all SNR levels, showing a large performance improvement. Note that the gap increases with the amount of inter-user interference. 

\begin{figure}[t]
	\centering
	\includegraphics[width=.9\columnwidth]{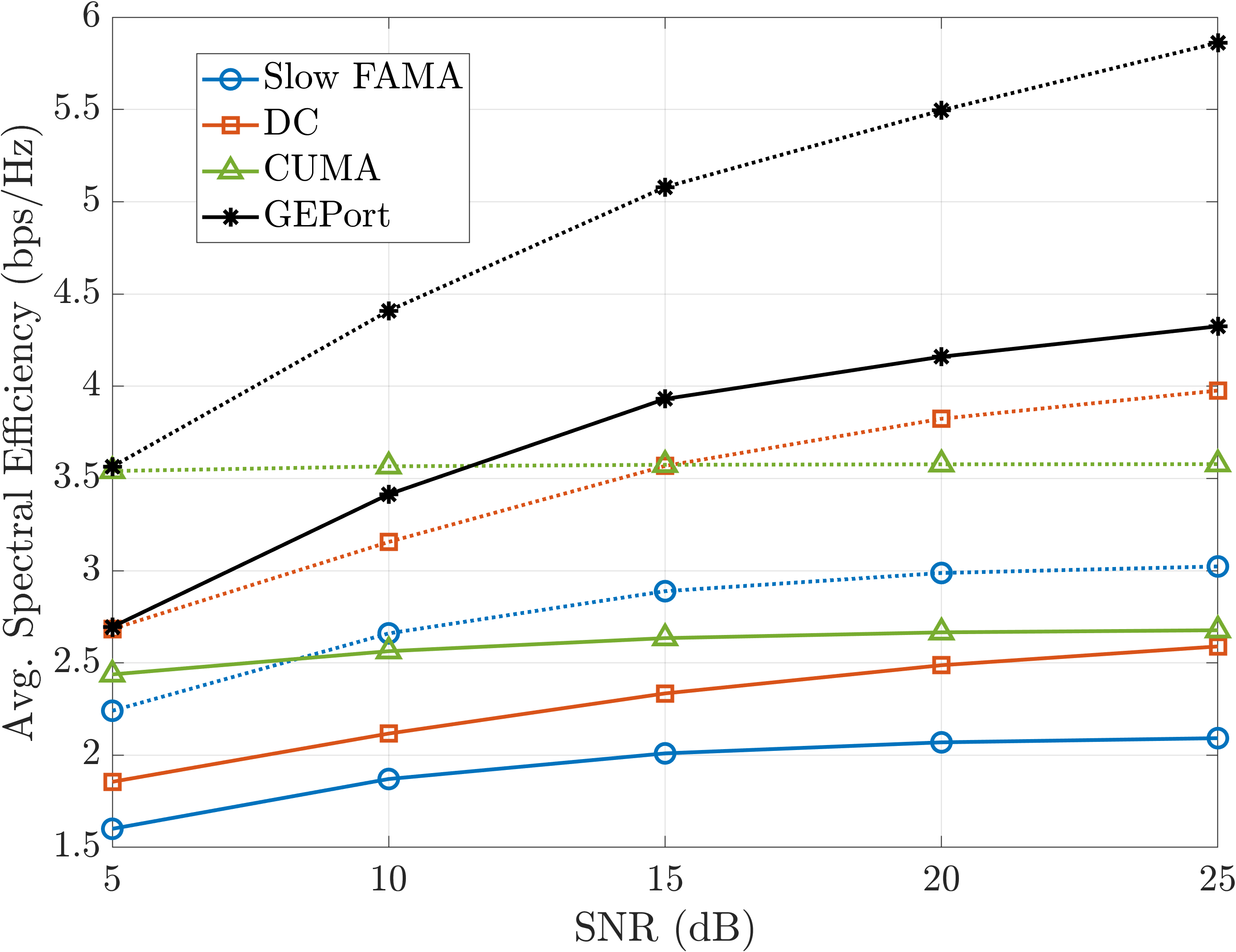}
	\caption{Average spectral efficiency for user $k$ vs. transmit SNR for the different schemes. Solid and dotted lines represent $N=100$ with $W=4$, and $N_1\times N_2=60\times 15$, with $W_1\times W_2=4\times 1$, respectively.} 
 \label{fig:SNR}
\end{figure}

Fig. \ref{fig:L} shows the relationship between the average SE and the number of active ports, $L$. The transmit SNR is fixed and set to $5$dB. As the number of active ports increases (number of RF chains in the proposed schemes), the performance of \ac{DC} and \ac{GEPort} substantially improves. The conventional slow \ac{FAMA} scheme uses single port (i.e., $L=1$); \ac{CUMA} implementation used 2 RF chains (i.e., $L=2$ in our comparison), although it effectively activates a much larger of FA ports as long as the SINR is improved under this combination scheme. Hence, the SE curves associated to these methods are constant. Interestingly, the simpler \ac{DC} combining method performs better than \ac{CUMA} beyond $L=5$. Finally, the performance gap of \ac{GEPort} compared to the reference schemes grows with the number of active ports for the ranges evaluated in our experiment\footnote{Since the number of selected ports in the proposed schemes equals the number of RF chains, a reasonably low value of $L$ should be considered for the sake of complexity reduction in the receiver implementation.}.

\begin{figure}[t]
	\centering
	\includegraphics[width=.9\columnwidth]{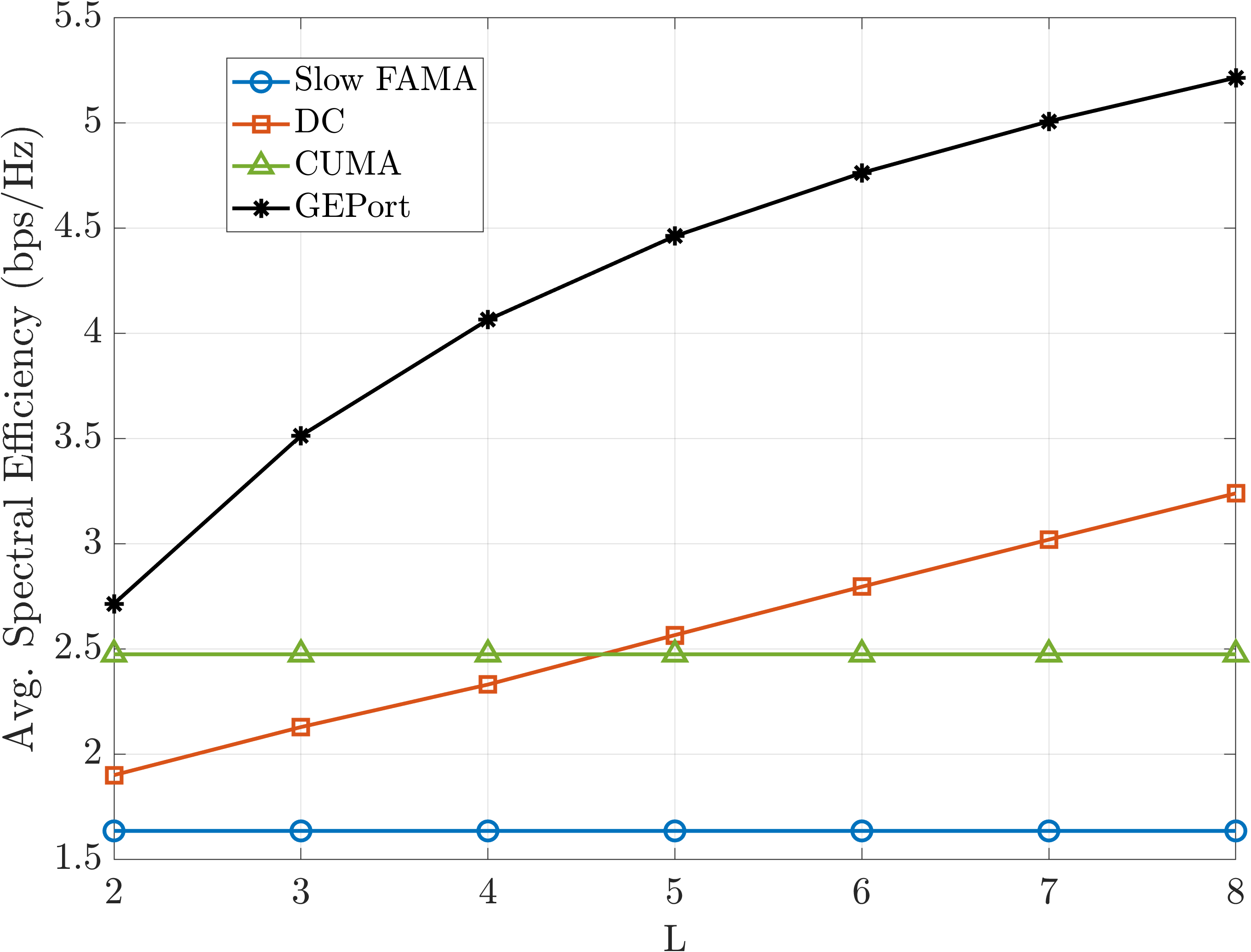}
	\caption{Average spectral efficiency for user $k$ vs. number of active ports $L$. The number of ports is $N=100$ for a size of $W=4$. The transmit SNR is set to $5$dB.} 
 \label{fig:L}
\end{figure}

\begin{figure}[t]
	\centering
	\includegraphics[width=.9\columnwidth]{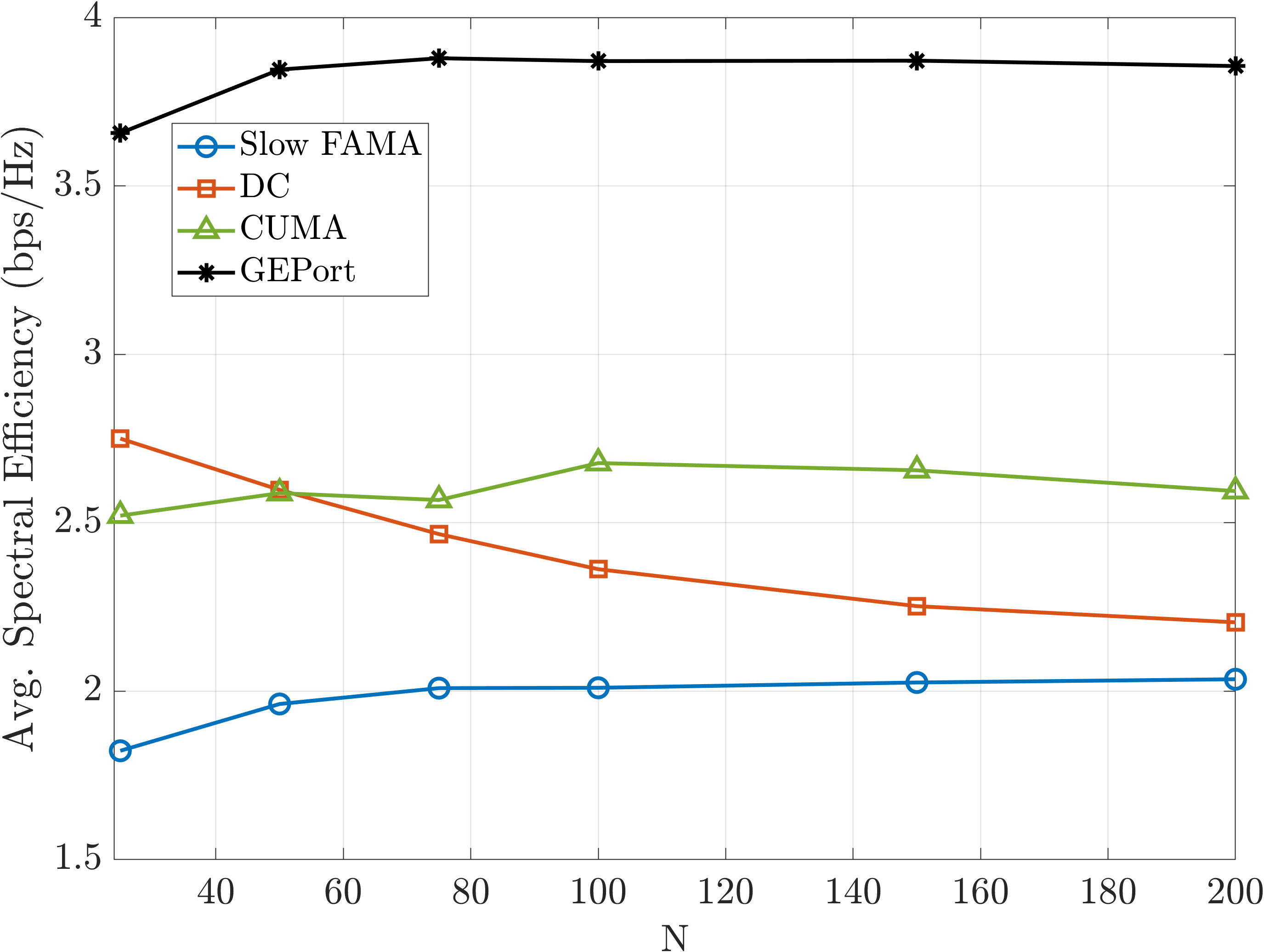}
	\caption{Average spectral efficiency vs. number of available ports $N$. Parameter values are $W=4$, $L=2$ and SNR=$15$dB.} 
 \label{fig:W}
\end{figure}

In Fig. \ref{fig:W} we evaluate the effect of densification in the performance, i.e., increasing $N$ for $W=4$ fixed. The simplest case of $L=2$ has been considered for benchmarking, with a transmit SNR=$15$ dB. We see that the effect of densification (i.e., increasing spatial correlation for the desired and interfering signals) affects the performance of the DC scheme: this confirms the rationale that the sequential design of the selection matrix $\mathbf{S}$ and the combining filter $\mathbf{w}$ (i.e., the multiport extension of slow-FAMA) does not optimally exploit the dependence structures between the matrices $\mathbf{A}$ and $\mathbf{B}$. Conversely, \ac{GEPort} scheme provides a substantial improvement over all competing schemes, although this effect tends to saturate under strong port densification.

%achieves these performance results despite of activating just $L=2$ ports, 

%As the size increases, the correlation among the ports reduces, and the spatial diversity allows for a reduction in the inter-user interference. While average SINR generally improves for all schemes, \ac{GEPort} achieves the highest SINR values for $W$ normalized lengths up to $30$. Values of $W$ above $25$ do not show any advantage in the different schemes except for \ac{CUMA}, which critically depends on the low correlation among the ports to achieve good SINR results. Indeed, \ac{CUMA} offers the largest SINR when the $W$ approaches $\frac{N-1}{2}$, that is when the ports distance is $d=\frac{\lambda}{2}$, and they are statistically independent \cite{RaMoWo24}. Remarkably, \ac{GEPort} achieves these performance results despite of activating just $L=2$ ports, whereas \ac{CUMA} uses a more complicated analog processing network that enables the combination up to the $N$ available ports. This is equivalent to a more sophisticated selection matrix, belonging to the set $\{0,1\}^{N \times 2}$, which provides a much richer spatial diversity.

\section{Conclusion}
We showcased the potential of multiport reception in FA-assisted multiuser communications. The use of a reduced number of RF chains opens the potential to substantially improve performance under the open-loop slow-FAMA paradigm, as long as the design of the port selection matrix and combining vectors effectively incorporates the correlation structures of signals. This paves the way for the development of new signal processing schemes in FA-enabled multiport receivers.
%\section*{Acknowledgment}
%The authors would like to thank [acknowledgment details] for their support in this work.

\section*{Acknowledgment}
The authors gratefully acknowledge Prof. Michael Joham for the insightful discussion that originally inspired the investigation of multiport reception in the context of FAMA, and Dr. Pablo Ram\'irez-Espinosa for his assistance in the slow-FAMA simulations.

\appendix

Consider the Hermitian matrix $\mathbf C\in\mathbb{C}^{N\times N}$, defined as $\mathbf C=\mathbf B^{-H/2}\mathbf A\mathbf B^{-1/2}$, with eigenvalues $\lambda_1\leq \lambda_2\leq \dots\lambda_N$ and corresponding eigenvectors $\mathbf v_1, \dots, \mathbf v_N$.
We denote the $l$-th entry of the eigenvector $\mathbf v_i$ as $v_{i,l}$, and introduce $\tilde{\mathbf A}_l\in\mathbb{C}^{N-1\times N-1}$ and $\tilde{\mathbf B}_l\in\mathbb{C}^{N-1\times N-1}$ as the principal minors of $\mathbf A$, and $\mathbf B$, respectively, obtained by deleting the $l$-th row and the $l$-th column of $\mathbf A$ and $\mathbf B$. Accordingly, $\tilde{\mathbf C}_l=\tilde{\mathbf B}_l^{-H/2}\tilde{\mathbf A}_l\tilde{\mathbf B}_l^{-1/2}$ and we introduce the eigenvalues of $\tilde{\mathbf C}_l$ as $\alpha_{l,1}\leq\alpha_{l,2}\leq\ldots\alpha_{l,N-1}$. Then the following eigenvector-eigenvalue identity holds \cite{Denton_2021}
\begin{equation}
\label{eq:eigenvector_identity}
|v_{i,l}|^2 \prod_{\substack{n=1,n \ne i}}^{N} (\lambda_i - \lambda_n) 
= \prod_{n=1}^{N-1} (\lambda_i - \alpha_{l,n})
\end{equation}
for all $i\in\{1,\ldots,N\}$. Recall that the achievable SINR can be obtained by computing the eigenvalues derived from the Rayleigh quotient in \eqref{eq:Rayquot}. As such, we are interested in finding the indices $l$ and $n$ such that $\alpha_{l,n}$ is maximum. Thus, under the assumption of eigenvalues ordered in ascending order, we get that $n=N-1$ and $l^*=\max_l\alpha_{l,N-1}$ leads to the largest SINR value.
As performing a search over $l$ to find $l^*$ is impractical, we alternatively use the equality in \eqref{eq:eigenvector_identity} to define a performance metric. To that end, we employ the equality for the dominant eigenvalue $\lambda_N$ and the dominant eigenvector $\mathbf v_N$ of $\mathbf C$ with $i=N$, and isolate the term corresponding to the SINR drop, that is 
\begin{equation}
    \lambda_N-\alpha_{l,N-1}=|v_{N,l}|^2\frac{\prod_{\substack{n=1}}^{N-1} (\lambda_N - \lambda_n) }{\prod_{n=1}^{N-2} (\lambda_N - \alpha_{l,n})},
    \label{eq:dropApp}
\end{equation}
getting \eqref{eq:SINRdrop} after rearranging the terms and introducing $\delta_{l}=\lambda_R-\alpha_{l,N-1}$.

\bibliographystyle{IEEEtran}

\bibliography{References}
\end{document}